\def\tsc#1{\csdef{#1}{\textsc{\lowercase{#1}}\xspace}}
\newtheorem{definition}{Definition}
\newtheorem{proposition}{Proposition}
\newtheorem{theorem}{Theorem}
\newtheorem{lemma}{Lemma} 
\newcommand{\cmark}{\ding{51}}%
\newcommand{\xmark}{\ding{55}}%
\newcommand{\sm}[1]{\scaleto{#1}{3.5pt}}
\DeclareMathAlphabet{\mathcal}{OMS}{cmsy}{m}{n}
\DeclareMathOperator*{\argminB}{argmin}   
\newcommand{\zono}[1]{\langle #1 \rangle}
\newcommand{\R}{{\mathbb{R}}}
\begin{document}

\let\WriteBookmarks\relax
\def\floatpagepagefraction{1}
\def\textpagefraction{.001}
\shorttitle{Set-Based Diffusion}
\shortauthors{Amr Alanwar et~al.}

\title[mode = title]{Distributed Set-Based Observers Using Diffusion Strategies}
\author[1,2]{Amr Alanwar}[orcid=0000-0003-2941-519X]
\ead{alanwar@kth.se}
\author[3]{Jagat Jyoti Rath}[orcid=0000-0002-8365-3538]
\ead{jagatjyoti.rath@gmail.com}
\author[4]{Hazem Said}[orcid=0000-0003-2419-4375]
\ead{hazem.said@eng.asu.edu.eg}
\author[1]{Karl Henrik Johansson}[orcid=0000-0001-9940-5929]
\ead{kallej@kth.se}
\author[5]{Matthias Althoff}[orcid=0000-0003-3733-842X]
\ead{althoff@tum.de}

\address[1]{School of Electrical Engineering and Computer Science, KTH Royal Institute of Technology, Sweden}
\address[2]{School of Computer Science and Engineering, Constructor University, Germany}
\address[3]{Department of Mechanical and Aero-Space Engineering, Institute of Infrastructure Technology Research and Management, India}
\address[4]{Department of Computer Engineering, Ain Shams University, Egypt}
\address[5]{Department of Computer Engineering, Technical University of Munich, Germany}

\begin{abstract}
We propose two distributed set-based observers using strip-based and set-propagation approaches for linear discrete-time dynamical systems with bounded modeling and measurement uncertainties. Both algorithms utilize a set-based diffusion step, which decreases the estimation errors and the size of estimated sets, and can be seen as a lightweight approach to achieve partial consensus between the distributed estimated sets. Every node shares its measurement with its neighbor in the measurement update step. In the diffusion step, the neighbors intersect their estimated sets using our novel lightweight zonotope intersection technique. A localization example demonstrates the applicability of our algorithms. 
%
%
\end{abstract}

\begin{keywords}
set-based estimation \sep 
zonotope \sep zonotopes intersection \sep
distributed estimation \sep
diffusion strategies
\end{keywords}

\maketitle

\section{Introduction}


State estimation algorithms compute a single state, a probability distribution of the state, or a set of all possible states. In stochastic approaches, measurement and process noise are modeled by probability  distributions (e.g., Gaussian \citep{conf:kfintro}). On the other hand, set-based approaches assume noise to be unknown but bounded by known bounds. Safety-critical applications require guarantees on the state estimation during operation -- such guarantees can be provided by set-based approaches. Set-based approaches are traditionally used in fault detection by generating an adaptive threshold to check the consistency of the measurements with the estimated output set \citep{conf:set_fault1,combastel2015merging,conf:set_fault3,conf:actuator_fault,conf:fdi}. According to the terminology in \citep{conf:althoffJagat}, there are three types of set-based observers: interval-based observers, set-propagation observers, and strip-based obser-vers. We focus on the following literature survey on set-propagation and strip-based observers as they are the paper's main focus.




%




\textbf{Set-propagation observers:} They are based on a Luenberger observer and, in general, obtain possible sets of states by combining the model and the measurements through an observer gain \citep{conf:comparison}. By merging optimal and robust observer gain designs, a zonotopic Kalman filter (ZKF) is proposed in \citep{conf:gainoptimalityKalmanZonotope} based on the Frobenius norm of a zonotope. The same author proposed a joint zonotopic and Gaussian Kalman filter (ZGKF) in \citep{conf:mergingKFZonotopic} along with robust fault detection in the presence of both bounded and Gaussian disturbances. This line of work has been extended to nonlinear systems using a zonotopic extended Kalman filter (ZEKF) in \citep{conf:zonotopicEKF}.





\textbf{Strip-based observers:} Unlike set-propagation observ-ers, which are based on observer gain derivation, strip-based observers generally intersect the set of states consistent with the model and the set consistent with the measurements to obtain the corrected state set \citep{conf:althoffJagat}. 
One early example of strip-based observers is a recursive algorithm bounding the state by ellipsoids \citep{conf:1968}. Another example based on normalized least-mean-squares is presented in \citep{conf:setmem4LMS}. A strip-based state estimation algorithm based on  
DC programming is proposed in \citep{conf:setmem1dcprogramming}. Authors in \citep{conf:set_mem_discriptor} consider linear time-varying descriptor systems for strip-based estimation. Strip-baseds observers for nonlinear models are investigated in \citep{conf:setmem5,conf:setmem6,conf:setmem7,conf:setmem8,conf:set-mem-event}. They are also used in applications such as underwater robotics \citep{conf:setmem2water}, a leader following consensus problem in networked multi-agent systems \citep{conf:setmem3eventleader}, and localization \citep{conf:setloc}. Authors in \citep{conf:disevent} consider a class of discrete time-varying systems with an event-based communication mechanism over sensor networks. An interconnected multi-rate system is considered in \citep{conf:intermulti}. A strip-based filtering subject to replay attacks and quantization effects is considered in \citep{conf:Franklin_ReplyAttackDistributed} and \citep{conf:Franklin_quantization}, respectively. Also, a distributed strip-based estimation and formation control algorithm for a fleet of vehicles is proposed in \citep{conf:vehicleformations}, where the set-based estimation enclosure is guaranteed in spite of the lack of knowledge of the control signal applied by the rest of the vehicles.



Different set representations have been used in set-based estimation, e.g., ellipsoids \citep{conf:ellipsoide,conf:set-ellipsoida,conf:dis_ellip_multirate},
orthotopes, and polytopes \citep{conf:orthotope,conf:polytope}. Zonotopes \citep{conf:zonotope_rep} are a special class of polytopes for which one can efficiently compute linear maps and Minkowski sums -- both are important operations for set-based observers. A strip-based observer based on zonotopes is introduced in \citep{conf:zono1_Combastel}. Another strip-based approach for disc-rete-time piecewise affine systems using zonotopes is studied in \citep{conf:zono2piecewise}. Yet another work considers discrete-time descriptor systems using zonotopes \citep{conf:zono_18pages}. Set-based estimation of uncertain discrete-time systems using zonotopes is also proposed in \citep{conf:zono4}. In our previous work \citep{conf:cdcdsse}, we considered secure state estimation with the aid of a diffusion step from the stochastic domain. In this work, we adapt the diffusion step to sets in order to guarantee state inclusion and safety. The measurement update step is common with our previous work \citep{conf:cdcdsse}, which is originally from \citep{conf:stripzono}.





\textbf{Contributions:} 
We propose two distributed set-based estimators, where a set of nodes is required to collectively estimate the set of possible states of a linear dynamical system in a distributed fashion. One main problem in distributed set-based estimation is the misalignment between the estimated sets by the distributed nodes. This problem is usually solved by consensus methods \citep{conf:dis-consRR}. However, traditional consensus methods require the sensor network to perform several iterations before arriving at a consensus, which causes a high overhead in set-based estimation. 
Unlike prior work and our previous work \citep{conf:cdcdsse}, we supplement our newly proposed estimators with a new set-based diffusion step which consists of a new lightweight zonotope intersection technique. We use the term \textit{diffusion} since our intersection formula resembles the traditional diffusion step in the stochastic Kalman filter. More specifically, our proposed diffusion step is considered an adaptation of the diffusion approach in the stochastic Kalman filter to sets in order to guarantee safety. We show that our diffusion step decreases the estimated sets' volume and can be seen as a lightweight approach for achieving partial consensus due to the agreement on the intersected sets. Furthermore, we provide closed forms for our parameter-finding optimization problems to realize faster execution tim-es. All used data and code to recreate our findings are publicly available\footnotemark.

\footnotetext{\href{https://github.com/aalanwar/Distributed-Set-Based-Observers-Using-Diffusion-Strategies}{https://github.com/aalanwar/Distributed-Set-Based-Observers-Using-Diffusion-Strategies}}

The rest of the paper is organized as follows: the problem statement and preliminaries are in Section~\ref{sec:sysmodel}. In Section~\ref{sec:setmem}, we present the distributed strip-based diffusion observer as our first algorithm. Our second solution is the distributed set-propagation diffusion observer, which is introduced in Section~\ref{sec:berger}. An analytical analogy to the diffusion Kalman filter is provided in Section~\ref{sec:analogy}. Both algorithms are evaluated in Section~\ref{sec:eval}. Finally, we conclude the paper in Section~\ref{sec:conc}.

\begin{figure*}
\begin{subfigure}{.5\textwidth}
  \centering
  \includegraphics[width=.85\linewidth]{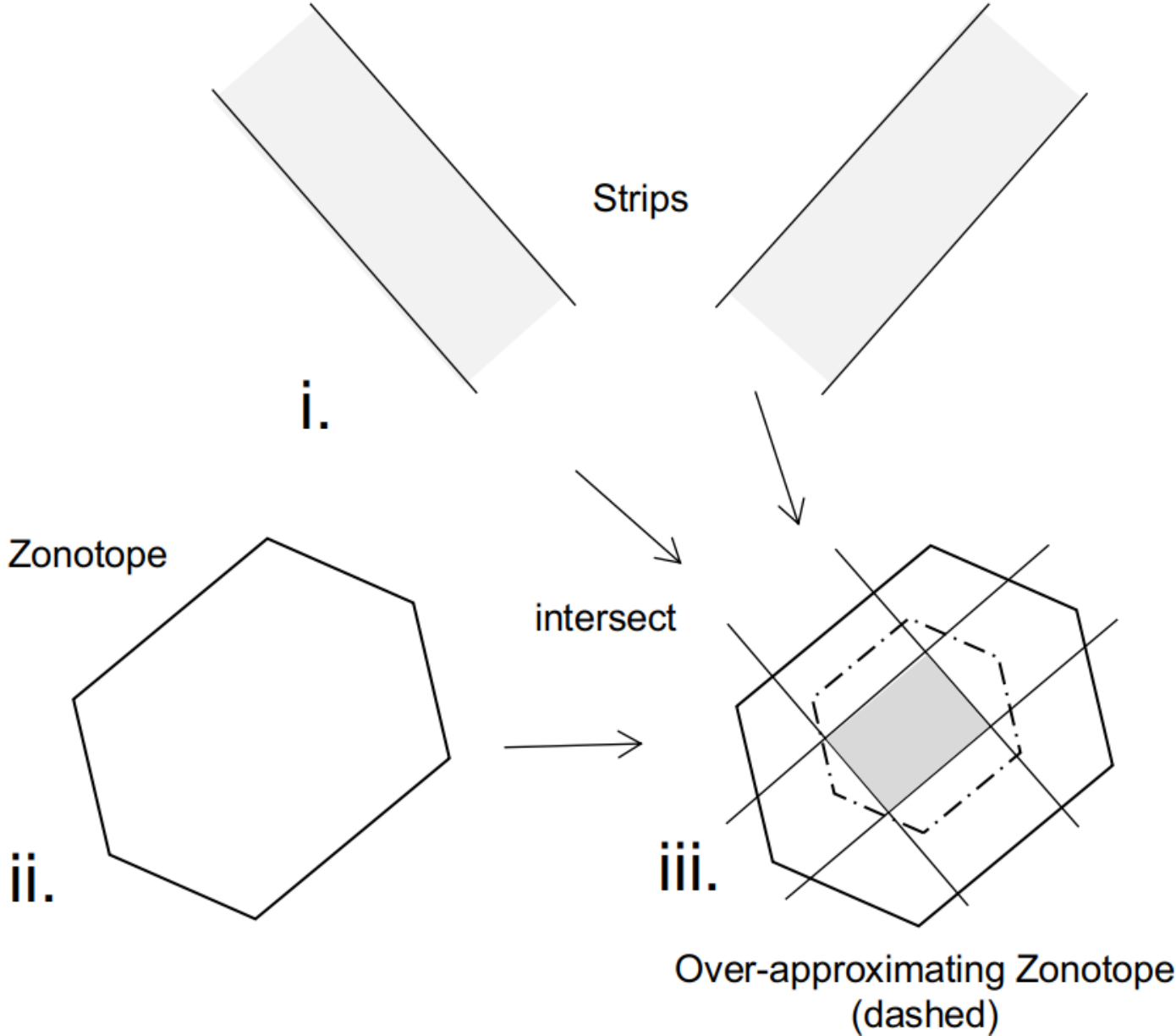}
  \caption{Measurement update step.}
  \label{fig:meas}
\end{subfigure}%
\begin{subfigure}{.5\textwidth}
  \centering
  \includegraphics[width=.6\linewidth]{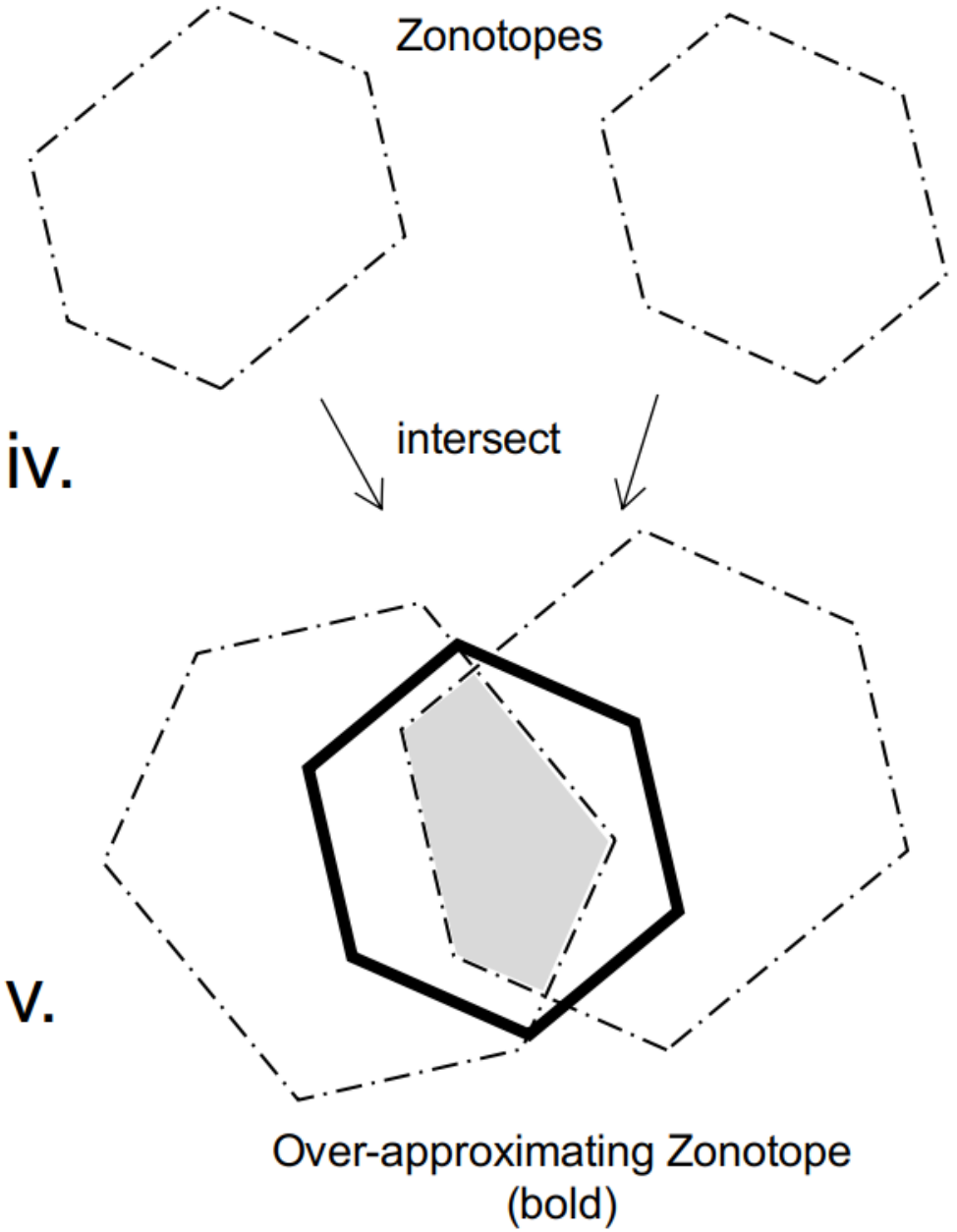}
  \caption{Diffusion update step. }
  \label{fig:diff}
\end{subfigure}
\caption{Strip-based approach in Algorithm~\ref{alg:1}, where (a) illustrates Lemma~\ref{th:stripzono} and (b) illustrates Theorem~\ref{th:diff}.} 
\label{fig:alg1}
\end{figure*}

\section{Problem Statement and Preliminaries} \label{sec:sysmodel}
We start by stating some preliminaries before describing our problem statement.



\begin{definition}
\textbf{(Zonotope \citep{conf:zono1998})} A zonotope $\mathcal{Z}$ consists of a center $c \in \R^n$ and a generator matrix $G$ $\in$ $\R^{n \times e}$. We compose $G$ of $e$ generators $g^{(i)} \in \R^n$, $i=1,\dots ,e$, where $G=\begin{bmatrix}g^{(1)}, \dots, g^{(e)}\end{bmatrix}$. 
\begin{equation}
     \mathcal{Z} = \Big\{ c + \sum\limits_{i=1}^{e} \beta^{(i)}  g^{(i)} \Big| -1\leq \beta^{(i)} \leq 1 \Big\}.
    \label{equ:zonoDef}
\end{equation}
We use the shorthand $\mathcal{Z}= \zono{ c,G }$ for a zonotope. 
\end{definition}

Given two zonotopes $\mathcal{Z}_1=\langle c_1,G_1 \rangle$ and $\mathcal{Z}_2=\langle c_2,G_2 \rangle$, the following operations can be computed exactly \citep{conf:zono1998}:
\begin{enumerate}
    \item Minkowski sum:
    \begin{equation}
     \mathcal{Z}_1 \oplus \mathcal{Z}_2 = \Big\langle c_1 + c_2, [G_1, G_2]\Big\rangle.
     \label{eq:minkowski}
     \end{equation}
    
    \item Linear map:
    \begin{equation}
     L \mathcal{Z}_1  = \Big\langle L c_1, L G_1\Big\rangle. 
     \label{eq:linmap}
     \end{equation}    
\end{enumerate}
 


 Let $C \in \R^{n \times p}$, then $\|C\|_F=\sqrt{\text{tr}(C^TC )}$ is the Frobenius norm of $C$. The Frobenius norm of a vector $x \in \R^n$ equals the Euclidean norm of the vector defined as $\|x\|=\sqrt{x^Tx}$. The F-radius of the zonotope $\mathcal{Z}=\langle c,G \rangle$ is the Frobenius norm of the generator matrix. We denote the reduction operator by $\downarrow_q G$ of a generator matrix $G$. It basically reduces the number of generators of a zonotope to a fixed number $q$ 
 so that the resulting zonotope is an over-approximation \citep{conf:reduceMethods}. The operator diag($x$) returns a diagonal matrix with $x$ on the diagonal. Finally, for a scalar $c$ and matrices $A$, $B$ and $C$, we provide the following trace properties \citep[p.11]{conf:cookbook}, where $\nabla_X f(X)$ is the derivative of $f(X)$ with respect to $X$:
 \begin{align}
    & \text{tr}(cA)=c\ \text{tr}(A), \label{eq:trscalMat} \\
    & \text{tr}(A+B)=\text{tr}(A)+\text{tr}(B), \label{eq:trMatpMat} \\
    &\nabla_X \text{tr}(A X B X^T C)= A^T C^T X B^T + C A X B, \label{eq:axbxc}\\
    & \nabla_X \text{tr}(B^T X^T C X B)= C^T X B B^T + C X B B^T. \label{eq:bxc}   
 \end{align}
 
We aim to estimate the set of possible states in a distributed fashion starting from the initial set $\mathcal{Z}_{0}$ by observing physical signals through sensory devices. Consider a set of $N$ nodes indexed by $i \in \{0, \ldots, N-1\}$ distributed geographically over some region. We denote the neighborhood of a given node $i$ by the set $\mathcal{N}_i$ containing $m_i$ nodes connected to node $i$, including the node itself. Every node is interested in estimating the set of possible states of the network state. The noise is assumed to be is unknown but bounded by a known bound and the initial set $\mathcal{Z}_{0}$ is known. We consider an observable discrete-time linear system model:
\begin{equation}
\begin{split}
 x_{k+1} &= F x_k + n_k,\\
y^{i}_k &= H^i x_k + v_k^i,
\end{split}
\label{eq:sysmodel}
\end{equation}
where $x_k \in \mathbb{R}^{n}$ is the state at time step $k$, $y^i_k \in \mathbb{R}^{p}$ the measurement observed by node $i$ at time step $k$, $F \in \mathbb{R}^{n \times n}$ the state matrix, $H^i \in \mathbb{R}^{p \times n}$ the measurement matrix, $n_k$ the process noise, and $v_k^i$ the measurement noise. The process and measurement noises are assumed to be unknown but bounded by zonotopes: $n_k \in \mathcal{Z}_{Q,k}=\langle0,Q_k\rangle$ and $v_k^i \in \mathcal{Z}_{r,k}^i= \langle0,\text{diag}(r^i_k)\rangle$. If these zonotopes are not centered around zero, we perform a coordinate transformation. All vectors and matrices are real-valued and have proper dimensions.



\section{Distributed Strip-based Diffusion Observer} \label{sec:setmem}

As mentioned in the introduction, we focus on two types of set-based observers: strip-based observers and set-propagation observers. We propose two algorithms extending the related work of both observers and adding the set-based diffusion step to both observers. Our contribution to strip-based observers is presented first. We denote the state estimated at node $i$ of the strip-based approach by $\hat{x}_{s,k}^i$ for time step $k$. The set of possible states in strip-based approaches are generally obtained from predicted, measurement, and corrected state sets, which are defined as follows:
 
\begin{definition} \label{def:predset}
\textbf{(Predicted State Set)} Given the system in \eqref{eq:sysmodel} with initial zonotope $\hat{\mathcal{Z}}^i_{s,0}=\mathcal{Z}_{0}$, then the predicted reachable set $\hat{\mathcal{Z}}_{s,k}^i$ considering the zonotope  $\mathcal{Z}_{Q,k}$ which bounds the modeling noise is defined as:
\begin{equation}
\hat{\mathcal{Z}}_{s,k}^i= F \hat{\mathcal{Z}}_{s,k-1}^i \oplus \mathcal{Z}_{Q,k}. 
\end{equation}
\end{definition}

\begin{definition} \label{def:measset} \textbf{(Measurement State Set)} Given the system in \eqref{eq:sysmodel}, then the measurement state set $\mathcal{S}^i_{k}$ of node $i$ is defined as the set of all possible solutions $x_{k}$ given $y_k^i$ and $v_k^i \in \mathcal{Z}_{r,k}^i= \langle0,\text{diag}(r^i_k)\rangle$. If the dimension of $y_k^i \in \mathbb{R}^{p}$ equals one, i.e., $p=1$, this measurement set is a strip:
\begin{equation}
    \mathcal{S}^i_k = \Big\{ x_k \Big| | H^i x_k - y^i_k| \leq r^i_k \Big\}, \label{eq:strip}
\end{equation}
and $\mathcal{S}^i_k$ is the intersection of multiple strips for $p>1$. 
\end{definition}
\begin{definition} \label{def:corrset} \textbf{(Corrected State Set)} Given the system in \eqref{eq:sysmodel} with initial set $\mathcal{Z}_{0}$, then the reachable corrected state set $\bar{\mathcal{Z}}_{s,k}^i$ of node $i$ is defined as the over-approximation of the intersection between $\hat{\mathcal{Z}}_{s,k}^i$ and $\mathcal{S}^i_k$:
\begin{equation}
     \big( \hat{\mathcal{Z}}_{s,k}^i \cap \mathcal{S}^i_k \big) \subseteq \bar{\mathcal{Z}}_{s,k}^i. 
\end{equation}
\end{definition}
Our proposed strip-based approach consists of three steps: measurement update, diffusion update, and time update. The time update step results in computing the predicted set (Definition~\ref{def:predset}). The measurement update step utilizes the previously predicted set along with the measurement set (Definition~\ref{def:measset}) to compute the corrected set (Definition~\ref{def:corrset}). Every node in a distributed setting has access to some, not all, measurements. 
Therefore, we propose sharing measurements and estimated sets in the measurement and diffusion update steps, respectively, to obtain a lightweight consensus between the distributed nodes. We first give a high-level description of the proposed algorithm in Algorithm~\ref{alg:1}, then we derive the required theory. Our approach corrects the reachable set of each node by determining the set of consistent states with the model and measurements received from all neighbors. More specifically, during the measurement update, every node collects measurements from neighbors, as shown in step i in Figure~\ref{fig:meas}, i.e., each node obtains a family of strips (measurements) to be intersected with the predicted zonotopic set (step ii in Figure~\ref{fig:meas}) of each node to obtain the estimated zonotope $\bar{\mathcal{Z}}^i_{s,k}$, dashed in  step iii in Figure~\ref{fig:meas}. Every node collects the shared sets from its neighbors in step iv in Figure~\ref{fig:diff}. Next, each node intersects its reachable set with shared sets of the neighbors in the set-based diffusion step in step v in Figure~\ref{fig:diff}. Finally, the estimated sets evolve according to the time update model. 

Let $\Gamma^i = \begin{bmatrix} H^{1_{}^T} ,\ldots, H^{m_{i}^T} \end{bmatrix}^T$, $\bar{y}^i_k=\begin{bmatrix} y^{{1}^T}_k ,\ldots, y^{m_{i}^T}_k \end{bmatrix}^T$, and $\bar{v}^i_k=\begin{bmatrix} v^{{1}^T}_k ,\ldots, v^{m_{i}^T}_k \end{bmatrix}^T \in \mathcal{Z}_{R,k}= \langle0,R^i_k\rangle$, where $R^i_k = \text{diag}\Big(\begin{bmatrix} r^{{1}^T}_k ,\ldots, r^{m_{i}^T}_k \end{bmatrix}^T\Big)$ with $m_i$ equals the number of available measurements from the node's neighbors. We propose to perform the measurement update step according to the following lemma \citep{conf:stripzono}, which is represented graphically in Figure~\ref{fig:meas}:



 \begin{lemma}[\citep{conf:stripzono}]
 \label{th:stripzono}
 Given are the zonotope $\hat{\mathcal{Z}}_{s,k-1}^i= \big\langle \hat{c}^i_{s,k-1},$ \\ $\hat{G}^i_{s,k-1} \big\rangle$, the family of $m_i$ measurement sets $\mathcal{S}^i_k$ in \eqref{eq:strip}, and the design parameters $\Lambda^i_{s,k}$. The intersection between the zonotope and measurement sets can be over-approximated by a zonotope $\bar{\mathcal{Z}}^i_{s,k}=\big\langle  \bar{c}^i_{s,k},\bar{G}^i_{s,k}\big\rangle $, where
  \begin{align}
  \bar{c}^i_{s,k} &=  \hat{c}^i_{s,k-1} +  \Lambda^i_{s,k}(\bar{y}^i_k - \Gamma^i \hat{c}^i_{s,k-1}), \label{eq:C_lambda}\\
  \bar{G}^i_{s,k} &= \Big[ (I - \Lambda_{s,k}^{i} \Gamma^i ) \hat{G}^i_{s,k-1}, \Lambda_{s,k}^i R^i_k \Big]. 
  \label{eq:G_lambda}
 \end{align}
  \end{lemma}

The matrix of design parameters $\Lambda^i_{s,k}$ minimizing the Frobenius norm of the zonotope is obtained as in \citep{conf:stripwithzono}.

\begin{algorithm}[t]
\caption{Distributed Strip-Based Diffusion Observer}
\label{alg:1}
\begin{algorithmic}
\State Start with initial zonotope $\bar{\mathcal{Z}}_{s,k}^i = \mathcal{Z}_{0}$ for all nodes, and compute at every node $i$ and time instant $k$ the following: 
\State \textbf{Step 1}: Measurement update:
\begin{align*}
\Lambda^i_{s,k} &= \argminB_{\Lambda^i_{s,k}} \| \bar{G}^i_{s,k}   \|_F \nonumber\\
 \bar{c}^i_{s,k} &=  \hat{c}^i_{s,k-1} +  \Lambda^i_{s,k}(\bar{y}^i_k - \Gamma^i \hat{c}^i_{s,k-1}) \nonumber\\
  \bar{G}^i_{s,k} &= \Big[ (I - \Lambda_{s,k}^{i} \Gamma^i ) \hat{G}^i_{s,k-1}, \Lambda_{s,k}^i R^i_k \Big] \nonumber
\end{align*}
\State \textbf{Step 2}: Diffusion update:
\begin{align*}
\mathfrak{w}^i_k =&  \argminB_{\mathfrak{w}^i_k} \| \grave{G}_{s,k}^i \|_F.  \nonumber\\
\grave{c}^i_{s,k}=&\frac{1}{\sum\limits_{j\in\mathcal{N}_i}w^{i,j}_k}\sum\limits_{j\in\mathcal{N}_i}w^{i,j}_k\bar{c}_{s,k}^j\nonumber\\
\grave{G}_{s,k}^i =& \frac{1}{\sum\limits_{j\in\mathcal{N}_i}w^{i,j}_k}\big[w^{i,1}_k\bar{G}^1_{s,k} ,...,w^{i,m_i}_k \bar{G}^{m_i}_{s,k} \big]\\
\tilde{G}_{s,k}^i  =&   \downarrow_q {\grave{G}_{s,k}^i}
\end{align*}
\State \textbf{Step 3}: Time update:
\begin{align*}
\hat{c}_{s,k}^i =& F \grave{c}^i_{s,k}\nonumber\\
\hat{G}_{s,k}^i =& [F \tilde{G}_{s,k}^i,  Q _k] \nonumber
\end{align*}
\end{algorithmic}
\end{algorithm}

 
As previously mentioned, every node shares its corrected zonotope $\bar{\mathcal{Z}}^i_{s,k}=\langle \bar{c}^i_{s,k} , \bar{G}^i_{s,k}  \rangle$ with its neighbours during the set-based diffusion step. We find the intersection between the shared zonotopes using the following theorem: 

 \begin{theorem}
 \label{th:diff}
The intersection between $m_i$ zonotopes $\bar{\mathcal{Z}}^j_{s,k}=\zono{\bar{c}^j_{s,k},\bar{G}^j_{s,k}}$, $\forall j \in \mathcal{N}_i$, can be over-approximated by the zonotope $\grave{\mathcal{Z}}^i_{s,k}=\zono{\grave{c}^i_{s,k},\grave{G}_{s,k}^i}$ with
 \begin{eqnarray}
\grave{c}^i_{s,k}&=&\frac{1}{\sum\limits_{j\in\mathcal{N}_i}w^{i,j}_k}\sum\limits_{j\in\mathcal{N}_i}w^{i,j}_k\bar{c}_{s,k}^j,\label{eq:cdiff}\\
\grave{G}_{s,k}^i &=& \frac{1}{\sum\limits_{j\in\mathcal{N}_i}w^{i,j}_k}\big[w^{i,1}_k\bar{G}^1_{s,k} ,...,w^{i,m_i}_k\bar{G}^{m_i}_{s,k}\big], \label{eq:gdiff}
 \end{eqnarray}
 where $w^{i,j}_k$ is a weight such that $\sum\limits_{j\in\mathcal{N}_i}w^{i,j}_k \neq 0$.
 \end{theorem}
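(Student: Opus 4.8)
The plan is to establish the containment $\bigcap_{j \in \mathcal{N}_i} \bar{\mathcal{Z}}^j_{s,k} \subseteq \grave{\mathcal{Z}}^i_{s,k}$ directly, by taking an arbitrary point in the intersection and exhibiting a factor vector that realizes it inside $\grave{\mathcal{Z}}^i_{s,k}$ --- the same style of argument used in the proof of Proposition \ref{th:stripzono}.

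First I would fix $x \in \bigcap_{j \in \mathcal{N}_i} \bar{\mathcal{Z}}^j_{s,k}$. Since $x$ lies in each zonotope $\bar{\mathcal{Z}}^j_{s,k}$, the definition \eqref{equ:zonoDef} gives, for every $j \in \mathcal{N}_i$, a factor vector $z^j$ with entries in $[-1,1]$ such that $x = \bar{c}^j_{s,k} + \bar{G}^j_{s,k} z^j$. Writing $w := \sum_{j \in \mathcal{N}_i} w^{i,j}_k \neq 0$, I would then use the elementary identity $x = \frac{1}{w}\sum_{j \in \mathcal{N}_i} w^{i,j}_k\, x$ --- legitimate because the scalars $w^{i,j}_k/w$ sum to one --- and substitute the representation of $x$ into each summand, obtaining
\[
x = \frac{1}{w}\sum_{j \in \mathcal{N}_i} w^{i,j}_k \bar{c}^j_{s,k} \;+\; \frac{1}{w}\sum_{j \in \mathcal{N}_i} w^{i,j}_k \bar{G}^j_{s,k} z^j .
\]

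Next I would read off the two terms. The first is exactly the claimed center $\grave{c}^i_{s,k}$ of \eqref{eq:cdiff}. The second is $\grave{G}^i_{s,k} z^b$, where $\grave{G}^i_{s,k}$ is the concatenation in \eqref{eq:gdiff} and $z^b := [(z^1)^\top, \dots, (z^{m_i})^\top]^\top$ is the vertical stacking of the individual factors; because each $z^j \in [-\mathbf{1},\mathbf{1}]$, we have $z^b \in [-\mathbf{1},\mathbf{1}]$, so $x = \grave{c}^i_{s,k} + \grave{G}^i_{s,k} z^b \in \grave{\mathcal{Z}}^i_{s,k}$. Since $x$ was arbitrary, this proves the inclusion.

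I do not anticipate a real difficulty; the only point requiring a short remark is that the weights need not be nonnegative --- if $w^{i,j}_k/w < 0$, the corresponding block of $\grave{G}^i_{s,k}$ merely has its columns negated, which leaves the generated set unchanged since the factors range symmetrically over $[-1,1]$, so only $w \neq 0$ is needed, exactly as hypothesized. It is also worth noting for context that the inclusion is generally strict, which is precisely why Algorithm \ref{alg:1} then picks $\mathfrak{w}^i_k$ to minimize $\|\grave{G}^i_{s,k}\|_F$, yielding the lightweight partial-consensus (diffusion) step.
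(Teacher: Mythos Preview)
Your argument is correct and follows essentially the same route as the paper's proof: take $x$ in the intersection, write it in each zonotope via factors $z^j$, form the weighted average, and read off $\grave{c}^i_{s,k}$ and $\grave{G}^i_{s,k}$ with the stacked factor vector. Your added remarks about the sign of the weights and the strictness of the inclusion are not in the paper but are harmless (and correct) clarifications.
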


 \begin{proof}
We aim to find the zonotope which over-approximates the intersection. Let $\bar{x} \in (\bar{\mathcal{Z}}^1_{s,k} \cap \bar{\mathcal{Z}}^2_{s,k} \cap ... \cap \bar{\mathcal{Z}}^{m_i}_{s,k} )$ then $\bar{x}$ is within the zonotope defined in \eqref{equ:zonoDef}, i.e., we have $z^{j} \in [-1,1]$ for each zonotope $j$ such that
 \begin{equation}
 \bar{x} = \bar{c}^j_{s,k} + \bar{G}^j_{s,k}z^{j}.
 \label{equ:zonox}
 \end{equation}
  By multiplying \eqref{equ:zonox} with $w^{i,j}_k$ and summing for all $m_i$ zonotopes, we obtain  
 \begin{align}
 \bar{x} &= \underbrace{\frac{1}{\sum\limits_{j\in\mathcal{N}_i}w^{i,j}_k}\sum\limits_{j\in\mathcal{N}_i}w^{i,j}_k\bar{c}_{s,k}^j}_{\grave{c}^i_{s,k}} \nonumber\\
 & + \underbrace{\frac{1}{\sum\limits_{j\in\mathcal{N}_i}w^{i,j}_k}\big[w_k^{i,1}\bar{G}^1_{s,k} ,...,w_k^{i,{m_i}}\bar{G}^{m_i}_{s,k}\big]}_{\grave{G}^i_{s,k}}     \underbrace{\begin{bmatrix} z^1 \\ \vdots \\ z^{m_i} \end{bmatrix}}_\textbf{z}  \nonumber\\
 &= \grave{c}^i_{s,k}  + \grave{G}^i_{s,k}  \textbf{z}.
\end{align}
Note that $\textbf{z} \in [-\textbf{1},\textbf{1}]$ as $z^{1},\dots,z^{m_i} \in [-1,1]$. Thus, the center and the generator of the over-approximating zonotope are $\grave{c}^i_{s,k}$ and $\grave{G}^i_{s,k}$, respectively.
\end{proof}




The optimal design of the weight vector $\mathfrak{w}^i_k=\Big[w^{i,1}_k,\dots,$ $w^{i,m_i}_k\Big]$ can be chosen such that the size of the zonotope $\grave{\mathcal{Z}}^i_{s,k}=\langle \grave{c}^i_{s,k},\grave{G}_{s,k}^i \rangle$ is minimal. Using the Frobenius norm as an indicator of zonotopic size, we compute $\mathfrak{w}^i_k$ by solving 
\begin{equation}
    \mathfrak{w}^i_k =  \argminB_{\mathfrak{w}^i_k} \| \grave{G}_{s,k}^i \|_F. 
    \label{eq:argminw}
\end{equation}
Next, we add the constraint $\sum\limits_{j\in\mathcal{N}_i}w^{i,j}_k = 1$ in order to facilitate finding the optimal weights $w^{i,j}_k$ in the following proposition.
\begin{proposition} For the over-approximated set $\grave{\mathcal{Z}}^i_{s,k}$ $= \langle \grave{c}^i_{s,k}$, $\grave{G}_{s,k}^i \rangle$ in Theorem~\ref{th:diff}, the optimal design parameters $w^{i,j}_k$ for \eqref{eq:argminw}, $\forall j \in \mathcal{N}_i$  where $\sum\limits_{j\in\mathcal{N}_i}w^{i,j}_k = 1$, can be obtained as: 
\begin{equation}
w^{i,j}_k = \frac{1}{\text{tr}\Big(\bar{G}^j_{s,k} \bar{G}_{s,k}^{j^T}\Big) \sum\limits_{r\in\mathcal{N}_i} \frac{1}{\text{tr}\Big(\bar{G}^r_{s,k}\bar{G}_{s,k}^{r^T}\Big)}}.
\label{eq:optimalw}
\end{equation}
\end{proposition}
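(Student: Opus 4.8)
The plan is to reduce the constrained minimization \eqref{eq:argminw} of the F-radius to a standard equality-constrained quadratic program in the scalar weights $\mathfrak{w}^i_k$. First I would use the normalization $\sum_{j\in\mathcal{N}_i} w^{i,j}_k = 1$ to drop the prefactor $1/\sum_j w^{i,j}_k$ in \eqref{eq:gdiff}, so that $\grave{G}_{s,k}^i = \big[\,w^{i,1}_k \bar{G}^1_{s,k},\dots,w^{i,m_i}_k \bar{G}^{m_i}_{s,k}\,\big]$; this is no loss of generality since $\grave{G}_{s,k}^i$ (and $\grave{c}^i_{s,k}$) are invariant under the rescaling $\mathfrak{w}^i_k \mapsto \alpha\,\mathfrak{w}^i_k$ for any $\alpha\neq 0$, so the F-radius-minimizing weights can be searched for on the slice $\sum_j w^{i,j}_k = 1$. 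Because the squared Frobenius norm of a block-row matrix is the sum of the squared Frobenius norms of its blocks, and using the trace identities \eqref{eq:trscalMat}--\eqref{eq:trMatpMat}, the objective becomes
\begin{equation}
\big\|\grave{G}_{s,k}^i\big\|_F^2 = \sum_{j\in\mathcal{N}_i} (w^{i,j}_k)^2\, tr\!\Big(\bar{G}^j_{s,k}\bar{G}_{s,k}^{j^T}\Big).
\end{equation}

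Writing $a_j \definedas tr\big(\bar{G}^j_{s,k}\bar{G}_{s,k}^{j^T}\big) > 0$, this is a positive-definite quadratic in $\mathfrak{w}^i_k$ on the affine constraint set, hence it has a unique minimizer obtainable by Lagrange multipliers. I would form $\mathcal{L}(\mathfrak{w}^i_k,\mu) = \sum_{j\in\mathcal{N}_i}(w^{i,j}_k)^2 a_j - \mu\big(\sum_{j\in\mathcal{N}_i} w^{i,j}_k - 1\big)$ and set $\partial\mathcal{L}/\partial w^{i,j}_k = 2 a_j w^{i,j}_k - \mu = 0$, giving $w^{i,j}_k = \mu/(2a_j)$. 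Substituting into the constraint yields $\mu/2 = 1\big/\sum_{r=1}^{m_i}(1/a_r)$, hence
\begin{equation}
w^{i,j}_k = \frac{1/a_j}{\sum_{r=1}^{m_i} 1/a_r} = \frac{1}{a_j \sum_{r=1}^{m_i}(1/a_r)},
\end{equation}
which is exactly \eqref{eq:optimalw} after re-expanding $a_j = tr\big(\bar{G}^j_{s,k}\bar{G}_{s,k}^{j^T}\big)$. Finally, since the Hessian of the objective with respect to $\mathfrak{w}^i_k$ is $\diag(2a_1,\dots,2a_{m_i}) \succ 0$, this stationary point is the global minimizer over the constraint set.

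The calculation is essentially routine; the only subtlety worth stating cleanly is the first step — that restricting to $\sum_j w^{i,j}_k = 1$ costs nothing because \eqref{eq:cdiff}--\eqref{eq:gdiff} are homogeneous of degree zero in $\mathfrak{w}^i_k$, so the zonotope $\grave{\mathcal{Z}}^i_{s,k}$ depends only on the direction of $\mathfrak{w}^i_k$. Everything after that is the textbook solution of a diagonal, strictly convex quadratic program with a single linear equality constraint, so no further obstacles are expected.
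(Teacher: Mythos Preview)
Your proposal is correct and follows essentially the same route as the paper: reduce $\|\grave{G}^i_{s,k}\|_F^2$ to $\sum_j (w^{i,j}_k)^2\, tr(\bar{G}^j_{s,k}\bar{G}^{j^T}_{s,k})$ under the normalization $\sum_j w^{i,j}_k=1$, then solve the resulting diagonal quadratic program with a Lagrange multiplier. The only cosmetic difference is the second-order check --- you invoke strict convexity of the objective on the affine constraint set, whereas the paper verifies minimality via the bordered Hessian; your argument is the cleaner of the two but not substantively different.
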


\begin{proof}
The Frobenius norm of the generator matrix can be computed as follows:
\begin{eqnarray}
    \| \grave{G}_{s,k}^i \|^2_F &=& \text{tr}\Big(\grave{G}_{s,k}^i \grave{G}_{s,k}^{i^T}\Big) \nonumber\\
    & \stackrel{\eqref{eq:gdiff}}{=}& \text{tr}\Big( \sum\limits_{r\in\mathcal{N}_i} (w^{i,r}_k)^2 \bar{G}^r_{s,k} \bar{G}^{r^T}_{s,k} \Big) \nonumber\\
    &\stackrel{\eqref{eq:trMatpMat}}{=}& \sum\limits_{r\in\mathcal{N}_i}  \text{tr}\Big( (w^{i,r}_k)^2 \bar{G}^r_{s,k}  \bar{G}^{r^T}_{s,k}\Big) \nonumber\\
    & \stackrel{\eqref{eq:trscalMat}}{=}& \sum\limits_{r\in\mathcal{N}_i} (w^{i,r}_k)^2 \text{tr}\Big( \bar{G}^r_{s,k}  \bar{G}^{r^T}_{s,k}\Big). 
\end{eqnarray}

Let $\alpha_r = \text{tr}\Big( \bar{G}^r_{s,k}  \bar{G}^{r^T}_{s,k}\Big)$, therefore we obtain the following constrained optimization problem:
\begin{align}
&w^{i,j}_k = \argminB_{w^{i,j}_k} \sum\limits_{r\in\mathcal{N}_i} \alpha_r (w^{i,r}_k)^2,\nonumber\\
& \text{subject to}   {\ }  f(\mathfrak{w}^i_k)=\sum\limits_{r\in\mathcal{N}_i} w^{i,r}_k -1 = 0. \label{eq:constopt}
\end{align}
This can be solved by introducing the Lagrange multiplier $s$~\citep{conf:lagrange}. The Lagrangian function for \eqref{eq:constopt} is
\begin{eqnarray}
\mathcal{L}=  \sum\limits_{r\in\mathcal{N}_i} \alpha_r (w^{i,r}_k)^2- s \Bigg(\sum\limits_{r\in\mathcal{N}_i} w^{i,r}_k - 1  \Bigg).
\end{eqnarray}
The necessary condition $\forall j \in \mathcal{N}_i$ for an extremum point is 
\begin{equation}
\nabla_{w^{i,j}_k} \mathcal{L} = 2 w^{i,j}_k \alpha_j - s =0.  
\label{eq:derv_w}
\end{equation}
Inserting \eqref{eq:derv_w} in \eqref{eq:constopt} results in:
\begin{equation}
s = \frac{2}{\sum\limits_{r\in\mathcal{N}_i} \frac{1}{\alpha_r}}. \label{eq:s}
\end{equation}
Inserting \eqref{eq:s} into \eqref{eq:derv_w} results in:
\begin{equation}
w^{i,j}_k \stackrel{\eqref{eq:derv_w}}{=} \frac{s}{2\alpha_j} \stackrel{\eqref{eq:s}}{=} \frac{1}{\alpha_j \sum\limits_{r\in\mathcal{N}_i} \frac{1}{\alpha_r}}. \label{eq:w}
\end{equation}
It remains to check if the extremum point is a minimum \citep{conf:lagrangeHes}. 
We compute the bordered Hessian matrix $H^b$, while suppressing the indices $i$ and $k$, and denoting $\nabla_{w^j} X(\mathfrak{w})$ by $X_{w^j}$ and $\nabla_{w^j w^{m}} X(\mathfrak{w})$ by $X_{w^{j,{m}}}$ for simplicity:
\begin{align}
H^b & =  \begin{bmatrix} 0 & -f_{w^1}& \dots & -f_{w^{m}} \\
-f_{w^1}& \mathcal{L}_{w^{1,1}} & \dots & \mathcal{L}_{w^{1,{m}}}  \\
\vdots & \vdots &\ddots & \vdots\\
-f_{w^{m}}& \mathcal{L}_{w^{{m},1}} & \dots & \mathcal{L}_{w^{{m},{m}}} 
\end{bmatrix} \nonumber\\
&=
\begin{bmatrix} 0 & -1 & -1 & \dots & -1 \\
-1 & 2\alpha_1 & 0& \dots & 0 \\ 
-1 & 0  & 2\alpha_2 & \dots& 0 \\ 
\vdots &\vdots  & &\ddots &\\
-1 & 0 & 0 & \dots & 2\alpha_{m}
\end{bmatrix}.     
\label{eq:hes}
\end{align}
The $m - 1$ largest  principal  minors  of \eqref{eq:hes} are negatives because $\alpha_r$ is positive $\forall \, r=1,\dots,m$. Thus, the extremum in \eqref{eq:w} is a minimum point, which concludes the proof. 
\end{proof}
After presenting our distributed strip-based approach using the diffusion strategy, we present our set-propagation diffusion observer.

\section{Distributed Set-Propagation Diffusion Observer}\label{sec:berger}
Unlike the strip-based observer developed in the previous section, which was based on geometric intersection, we propose the following set-propagation observer based on the following structure while bounding the unknown noise by the corresponding bounding zonotope: 
\begin{equation}
\label{Eq:LO}
x_{k}^i = F x_{k-1}^i + n_k + \Lambda^i_{v,k}(\bar{y}^i_k - \Gamma^i x^i_{k-1} - \bar{v}^i_k) 
\end{equation}
 where $\Lambda^i_{v,k}$ is a time-varying observer gain computed at each time step. The design of the observer makes use of the bounds of the noises. Let $x_{v,k}^i$ be the state estimated by the set-propagation observer. 
For the distributed system in \eqref{eq:sysmodel}, the proposed design consists of two steps: Luenberger update and diffusion update. During the Luenberger update, every node shares its measurement with its neighbour, while in the diffusion step, every node shares the estimated information with its neighbours. We first discuss the Luenberger update step.
\begin{theorem}
 \label{th:berg}
 Given are the system in \eqref{eq:sysmodel}, the measurements $y^i_k$, several zonotopes bounding $\hat{x}^i_{v,0} \in \mathcal{Z}_0 $, $n_k \in \mathcal{Z}_{Q,k} =  \langle 0,Q_k \rangle $, $\bar{v}_k^i\in  \mathcal{Z}^i_{R,k}=\langle 0,R_k^i \rangle $, and the state $\hat{x}^i_{v,k-1} \in \Big\langle\hat{c}^i_{v,k-1}$,  $\hat{G}^i_{v,k-1}\Big\rangle $. The zonotope bounding the uncertain states can be iteratively obtained as $\hat{x}^i_{v,k} \in  \Big\langle\bar{c}^i_{v,k} , \bar{G}^i_{v,k} \Big\rangle$, where
 \begin{align}
  \bar{c}^i_{v,k}  =&  F\hat{c}^i_{v,k-1}+  \Lambda_{v,k}^{i} \Big( \bar{y}^i_k -\Gamma^i\hat{c}^i_{v,k-1}  \Big) \label{eq:Cf_lambda2},\\
  \bar{G}^i_{v,k}  =&  \Big[\Big(F -  \Lambda_{v,k}^{i} \Gamma^i\Big)\hat{G}^{i}_{v,k-1}, -\Lambda_{v,k}^{i} R_k^i ,Q_k\Big].
  \label{eq:Gf_lambda2}
 \end{align}
\end{theorem}
\begin{proof} Given $\hat{x}_{v,k-1}^i \in \Big\langle\hat{c}_{v,k-1}^i, \hat{G}_{v,k-1}^i\Big\rangle$, $n_k \in \mathcal{Z}_{Q,k}$ $= \langle0$ $,Q_k\rangle$, and $\bar{v}_k^i\in \mathcal{Z}_{R,k}^i=\langle 0,R_k^i \rangle $, and by using \eqref{Eq:LO}, one obtains:
\begin{eqnarray}
&&\hat{x}_{v,k}^i \in   \Big\langle\bar{c}_{v,k}^i , \bar{G}_{v,k}^i \Big\rangle \nonumber\\
&& \stackrel{\eqref{Eq:LO}}{=}\Big(F - \Lambda_{v,k}^{i} \Gamma^i \Big) \hat{\mathcal{Z}}_{v,k-1}^i \oplus \mathcal{Z}_{Q,k}  
 \nonumber\\
&&\quad  \oplus  \Big\langle \Lambda_{v,k}^{i} \bar{y}^i_k  , 0 \Big\rangle \oplus -  \Lambda^i_{v,k} \mathcal{Z}_{R,k} 
\nonumber\\
&&=\Big(  F - \Lambda_{v,k}^{i} \Gamma^i \Big) \Big\langle \hat{c}^i_{v,k-1},\hat{G}^i_{v,k-1} \Big\rangle \oplus \langle0,Q_k\rangle \nonumber\\
&&\quad \oplus  \Big\langle  \Lambda_{v,k}^{i} \bar{y}^i_k , 0 \Big\rangle\! 
 \oplus\! \Big\langle 0,-\Lambda_{v,k}^{i} R_k^i \Big\rangle
 \nonumber\\
&&\stackrel{\eqref{eq:minkowski},\eqref{eq:linmap}}{=}  \Bigg\langle \underbrace{\Big[ F\hat{c}^i_{v,k-1}+  \Lambda_{v,k}^{i} \Big( \bar{y}^i_k -\Gamma^i\hat{c}^i_{v,k-1}  \Big) \Big]}_{\bar{c}_{v,k}^i} ,\nonumber\\
&&\quad \underbrace{\Big[\Big(F -  \Lambda_{v,k}^{i} \Gamma^i\Big)\hat{G}^{i}_{v,k-1}, -\Lambda_{v,k}^{i} R_k^i ,Q_k\Big]}_{\bar{G}_{v,k}^i} \Bigg\rangle \nonumber
\end{eqnarray}
\end{proof}

\begin{algorithm}[t]
\caption{Distributed Set-Propagation Diffusion Observer}
\label{alg:2}
\begin{algorithmic}
\State Start with initial zonotope $\bar{\mathcal{Z}}_{v,k}^i = \mathcal{Z}_{0}$ for all nodes, and at every time instant $k$, compute at every node $i$: 
\State \textbf{Step 1}: Luenberger update:
\begin{eqnarray*}
 \Lambda^i_{v,k} &=& \argminB_{\Lambda^i_{v,k}} \| \bar{G}^i_{v,k}  \|_F \nonumber\\
  \bar{c}^i_{v,k}  &=& F\hat{c}^i_{v,k-1}+  \Lambda_{v,k}^{i} \Big( \bar{y}^i_k -\Gamma^i\hat{c}^i_{v,k-1}  \Big)\nonumber\\ 
  \bar{G}^i_{v,k}  &=&  \Big[\Big(F -  \Lambda_{v,k}^{i} \Gamma^i\Big)\hat{G}^{i}_{v,k-1}, -\Lambda_{v,k}^{i} R_k^i ,Q_k\Big] 
\end{eqnarray*} 
\State \textbf{Step 2}: Diffusion update:
\begin{eqnarray*}
\mathfrak{w}_k^i &=&  \argminB_{\mathfrak{w}_k^i} \| \tilde{G}_{v,k}^i \|_F. \\
\hat{c}^i_{v,k}&=&\frac{1}{\sum\limits_{j\in\mathcal{N}_i}w^{i,j}_k}\sum\limits_{j\in\mathcal{N}_i}w^{i,j}_k\bar{c}_{v,k}^j\nonumber\\
\tilde{G}_{v,k}^i &=& \frac{1}{\sum\limits_{j\in\mathcal{N}_i}w^{i,j}_k}\big[w^{i,1}_k\bar{G}^1_{v,k},\dots,w^{i,m_i}_k\bar{G}^{m_i}_{v,k}\big]\nonumber\\
  \hat{G}_{v,k}^i  &=&   \downarrow_q {\tilde{G}_{v,k}^i} 
\end{eqnarray*}
\end{algorithmic}
\end{algorithm}

We propose to compute the design vectors $\Lambda^i_{v,k}$ according to \citep{conf:gainoptimalityKalmanZonotope} such that:
\begin{equation}
 \Lambda^i_{v,k} = \argminB_{\Lambda^i_{v,k}} \| \bar{G}^i_{v,k} \|_F,
 \label{eq:argminG_v}
\end{equation}
which is a lightweight indication of the volume to decrease the computation cost and maintain a good performance.
The following proposition is provided to compute the optimal parameters $\Lambda^i_{v,k}$.


\begin{proposition} For the estimated zonotopic set $\hat{\mathcal{Z}}^i_{v,k} = \langle \hat{c}^i_{v,k-1}$, $\hat{G}^i_{v,k-1} \rangle$ corresponding to node $i$, the optimal design parameters $\Lambda^i_{v,k}$ for \eqref{eq:argminG_v} can be obtained as:
\begin{equation}
\Lambda^i_{v,k} = \frac{F \hat{G}^{i^{}}_{v,k-1} \hat{G}^{i^T}_{v,k-1} \Gamma^{i^T}}{ \Gamma^i \hat{G}^i_{v,k-1}\hat{G}^{i^T}_{v,k-1} \Gamma^{i^T} +   R_k^{i} R_k^{i^T}  }.
\label{eq:lamboptF}
\end{equation}
\label{pr:lamboptF}
\end{proposition}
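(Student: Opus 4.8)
The plan is to treat the choice of $\Lambda^i_{v,k}$ as an unconstrained smooth minimisation of $\|\bar G^i_{v,k}\|_F^2 = tr\!\big(\bar G^i_{v,k}\bar G^{i^T}_{v,k}\big)$ and to differentiate using the trace identities \eqref{eq:axbxc}--\eqref{eq:bxc}. First I would put the generator \eqref{eq:Gf_lambda2} into compact form. Stacking the neighbours' output matrices as $\Gamma^i_k=[H^{1^T},\dots,H^{m_i^T}]^T$ turns the sum into a single product, $\sum_{j\in\mathcal N_i}\lambda^{i,j}_{v,k}H^j=\Lambda^i_{v,k}\Gamma^i_k$, and introducing block-selection matrices $\Omega_r$ (each equal to the $p\times p$ identity in its $r$-th block and zero elsewhere) gives $\lambda^{i,r}_{v,k}=\Lambda^i_{v,k}\Omega_r$. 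Then, with $P=\hat G^i_{v,k-1}\hat G^{i^T}_{v,k-1}$, the $Q_k$ block contributes the $\Lambda$-independent term $tr(Q_kQ_k^T)$, the first block contributes $tr\!\big((F-\Lambda^i_{v,k}\Gamma^i_k)P(F-\Lambda^i_{v,k}\Gamma^i_k)^T\big)$, and the $m_i$ noise blocks contribute $tr\!\big(\Lambda^i_{v,k}\,M\,\Lambda^{i^T}_{v,k}\big)$ with $M=\sum_{r=1}^{m_i}\Omega_r R^r_k R^{r^T}_k\Omega_r^T$.

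Next I would expand the trace, discard the constant $tr(FPF^T)+tr(Q_kQ_k^T)$, and collect the remaining terms as a quadratic $tr\!\big(\Lambda^i_{v,k}(\Gamma^i_k P\Gamma^{i^T}_k+M)\Lambda^{i^T}_{v,k}\big)$ minus a linear term $2\,tr\!\big(\Lambda^i_{v,k}\Gamma^i_k PF^T\big)$, using symmetry of $P$ to merge the two cross terms. Differentiating with respect to $\Lambda^i_{v,k}$ --- \eqref{eq:axbxc} with $A=C=I$ on the quadratic part, which yields $2\Lambda^i_{v,k}(\Gamma^i_k P\Gamma^{i^T}_k+M)$ since $\Gamma^i_k P\Gamma^{i^T}_k+M$ is symmetric, together with $\nabla_X tr(XA)=A^T$ on the linear part --- the stationarity condition reads $\Lambda^i_{v,k}(\Gamma^i_k P\Gamma^{i^T}_k+M)=FP\Gamma^{i^T}_k$. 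Substituting back $P=\hat G^i_{v,k-1}\hat G^{i^T}_{v,k-1}$ and solving gives exactly \eqref{eq:lamboptF}, with the fraction read as right multiplication by the inverse of the denominator.

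Finally I would settle optimality: the objective is a convex quadratic in the entries of $\Lambda^i_{v,k}$, because $\Gamma^i_k P\Gamma^{i^T}_k+M$ is a sum of Gram matrices and hence symmetric positive semidefinite, so every stationary point is a global minimiser; positive definiteness of $\Gamma^i_k P\Gamma^{i^T}_k+M$ (generic under the observability assumption, or guaranteed when the $R^r_k$ have full row rank) also makes the minimiser unique and the inverse in \eqref{eq:lamboptF} well defined.

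I expect the only real obstacle to be the bookkeeping in the first step: rewriting the neighbour-indexed sum $\sum_{j}\lambda^{i,j}_{v,k}H^j$ and the individual noise scalings $\lambda^{i,r}_{v,k}R^r_k$ as single matrix products via $\Gamma^i_k$ and the $\Omega_r$, and then matching each expanded trace term to the correct instance of \eqref{eq:axbxc}--\eqref{eq:bxc} with the transposes placed correctly. After that, the differentiation and the convexity check are routine, mirroring the argument already used for the weights $w^{i,j}_k$.
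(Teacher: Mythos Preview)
Your proposal is correct and follows essentially the same route as the paper: rewrite $\bar G^i_{v,k}$ in compact form via $\Gamma^i_k$ and the block-selectors $\Omega_r$, expand $tr(\bar G^i_{v,k}\bar G^{i^T}_{v,k})$, differentiate with the trace identities \eqref{eq:axbxc}--\eqref{eq:bxc}, and read off the stationarity condition. Your treatment is in fact slightly more complete than the paper's, since you also note that the $Q_k$ block is $\Lambda$-independent and add the convexity argument that certifies the stationary point is a global minimiser, whereas the paper simply verifies \eqref{eq:lamboptF} by substitution into the first-order condition.
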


\begin{proof}  
 The proof is along the lines of \citep{conf:stripwithzono}. 
The Frobenious norm of \eqref{eq:Gf_lambda2} can be computed as
\begin{align}
\| \bar{G}^i_{v,k} \|^2_F =& \text{tr}\ (\bar{G}^{i^{}}_{v,k} \bar{G}^{i^T}_{v,k})\nonumber \\ 
 \stackrel{\eqref{eq:Gf_lambda2}}{=}& \text{tr}\ \Bigg( \Big(F-\Lambda^i_{v,k} \Gamma^i \Big) \mathcal{G}\Big(F-\Lambda^i_{v,k} \Gamma^i  \Big)^T \nonumber\\ 
& +  \Lambda^i_{v,k}  \mathcal{R}^i\Lambda^{i^T}_{v,k} + Q_kQ_k^T \Bigg) \nonumber\\ 
=&\text{tr}\Big( F\mathcal{G}F^T - F\mathcal{G} \Gamma^{i^T} \Lambda^{i^T}_{v,k}  - \Lambda^i_{v,k} \Gamma^i  \mathcal{G}F^T   \nonumber\\ 
& +  \Lambda^i_{v,k} \Gamma^i \mathcal{G} \Gamma^{i^T}\Lambda^{i^T}_{v,k} +  \Lambda^i_{v,k}  \mathcal{R}^i\Lambda^{i^T}_{v,k} {+} Q_kQ_k^T \Big),  \label{eq:ggt}
\end{align}
where $\mathcal{R}^i= R_k^i R_k^{i^T}$ and $\mathcal{G}=\hat{G}^{i^{}}_{s,{k-1}} \hat{G}^{i^T}_{s,{k-1}}$. The optimal value of $\Lambda^i_{v,k}$ can be obtained by solving
\begin{equation}
\nabla_{\sm \Lambda^i_{v,k}} \| \bar{G}^i_{v,k} \|^2_F = 0,
\label{eq:jacggt}
\end{equation}  
where the Jacobian in \eqref{eq:jacggt} can be computed by applying matrix properties in \eqref{eq:axbxc} and \eqref{eq:bxc} to \eqref{eq:ggt}:
\begin{eqnarray}
\nabla_{\sm \Lambda^i_{v,k}} \| \bar{G}^i_{v,k} \|^2_F 
&=& -2 F\mathcal{G} \Gamma^{i^T}  + 2 \Lambda^i_{v,k} \Gamma^i \mathcal{G} \Gamma^{i^T} +  2 \Lambda^i_{v,k}  \mathcal{R}^i \nonumber\\
&=& 0
\label{eq:jacg}
\end{eqnarray}
By inserting the optimal $\Lambda^i_{v,k}$ from \eqref{eq:lamboptF} in \eqref{eq:jacg}, one can see that \eqref{eq:jacg} is fulfilled. We do not have to check the type of extremum since we have an unconstrained norm, which is always convex. 
\end{proof}

Following the Luenberger update step, in the diffusion step, each node shares the information of the estimated zonotope $\langle \bar{c}^i_{v,k}, \bar{G}^i_{v,k} \rangle$ with its neighbours. The intersection between the shared zonotopes is then computed as discussed earlier in Theorem~\ref{th:diff}. The iterative design of the above two-step Luenberger observer is provided in Algorithm~\ref{alg:2}. We note that one of the differences between the introduced set-propagation observer and strip-based observer is the position of the time update step. This appears by comparing Theorem~\ref{th:berg} with Theorem~\ref{th:stripzono}. Also, it can be easily shown that 
\begin{align}
    \Lambda^i_{v,k} = F \Lambda^i_{s,k}.
    \label{eq:lambCompare}
\end{align}

\section{Comparison to Diffusion Kalman Filter}\label{sec:analogy}

In this section, we build the analogy with the diffusion Kalman filter (DKF) \citep{conf:diffusion}. We start by defining the Gaussian random vectors $n_k \sim \mathcal{N}(0,\bar{Q}_k)$ and $v_k^i \sim \mathcal{N}(0,\bar{R}_k^i)$ where $\mathcal{N}(\mu,\Sigma)$ denotes a Gaussian distribution with mean $\mu$ and covariance $\Sigma$. We start by showing the analogy with the incremental update, then we show it for the diffusion update. 
\subsection{Incremental Update}

Corresponding to incremental update in DKF we have for every node $i$ with a neighbor node $j$ \citep{conf:diffusion}
\begin{align*}
&{S}^i_{k}  = \bar{R}^i_k + H^i {P}^i_{k-1}H^{i^T},\\
&\bar{x}^i_{k} = \hat{x}^i_{k-1} + {P}^i_{k-1} H^{i^T}{S}_{k}^{i^{-1}}( y^{j}_{k} - H^i \hat{x}_{k-1}^i ), \\
&{P}^i_{k} = {P}^i_{k-1} - {P}^i_{k-1}H^{i^T}{S}_{k-1}^{i^{-1}}H^{i}{P}^i_{k-1}.
\end{align*}
Let us introduce 
\begin{align}
L_{k} = {P}^i_{k-1} H^{i^T} {S}_{k}^{i^{-1}}. 
\label{eq:LDKF}
\end{align}
Then after the incremental update, the center estimate corresponding to each node can be expressed using the gain $L_{k}$ as
\begin{align}
\bar{x}^i_{k} &=  \hat{x}^i_{k-1}+ L_{k}( y^{j}_{k} - H^i \hat{x}_{k-1}^i ). \label{eq:diffx}
\end{align}
%
%
We have from \eqref{eq:lamboptF} and \eqref{eq:lambCompare}
\begin{equation}
\Lambda^i_{s,k} = \frac{\hat{G}^{i^{}}_{s,k-1} \hat{G}^{i^T}_{s,k-1} \Gamma^{i^T}_k}{ \Gamma^i_k \hat{G}^i_{s,k-1}\hat{G}^{i^T}_{s,k-1} \Gamma^{i^T}_k +   R_k^{i} R_k^{i^T}  },
\label{eq:lamboptFstrip}
\end{equation}

Comparing \eqref{eq:LDKF} with \eqref{eq:lamboptFstrip} and \eqref{eq:diffx} with \eqref{eq:C_lambda}  for a node $i$ with a single neighbor $j$ results in the following analogy
\begin{align}
   P^i_{k} \equiv \bar{G}^i_{s,k} \bar{G}_{s,k}^{i^T}, \quad 
   \bar{R}_k \equiv R_k^i R_k^{i^T}, \quad \bar{x}^i_{k} \equiv \bar{c}^i_{s,k}.
\end{align}

\subsection{Diffusion Update}

The diffusion update in DKF can be expressed as
\begin{eqnarray*}
\hat{x}^i_{k}&=&\sum\limits_{j\in\mathcal{N}_i}w_k^{i,j}\bar{x}_{k}^j\nonumber,
\end{eqnarray*}
which resembles our zonotopes intersection formula in \eqref{eq:cdiff} in case of $\sum\limits_{j\in\mathcal{N}_i} w_k^{i,j} =1$. Moreover, the optimal weight based on covariance intersection \citep[Section III]{conf:covintersection} is a function of $\frac{1}{tr(P^j_{k}) \sum\limits_{r\in\mathcal{N}_i} \frac{1}{tr(P^r_{k})}}$ which again resembles our optimal weights in \eqref{eq:optimalw}.


\subsection{Discussion}

Based on the above results, it can be said that if the initial center $c_0$ is chosen equally for the DKF and our strip-based observer along with the same initial covariance matrices $P_0  =  \hat{G}_{0}^i(\hat{G}_{0}^i)^T $ and same diffusion weights, then in the absence of any over-approximation error due to the zonotopic reduction operation, both observers would result in the same optimal gain  $L_k$ and return the same center estimate $\hat{x}^i_{k}$.  

\begin{figure*}[h]
    \centering
    \begin{tabular}{ p{0.50\textwidth}  p{0.50\textwidth}}
        \resizebox{0.49\textwidth}{!}{
            \begin{subfigure}[h]{0.49\textwidth}
      \centering
        \includegraphics[scale=0.4]{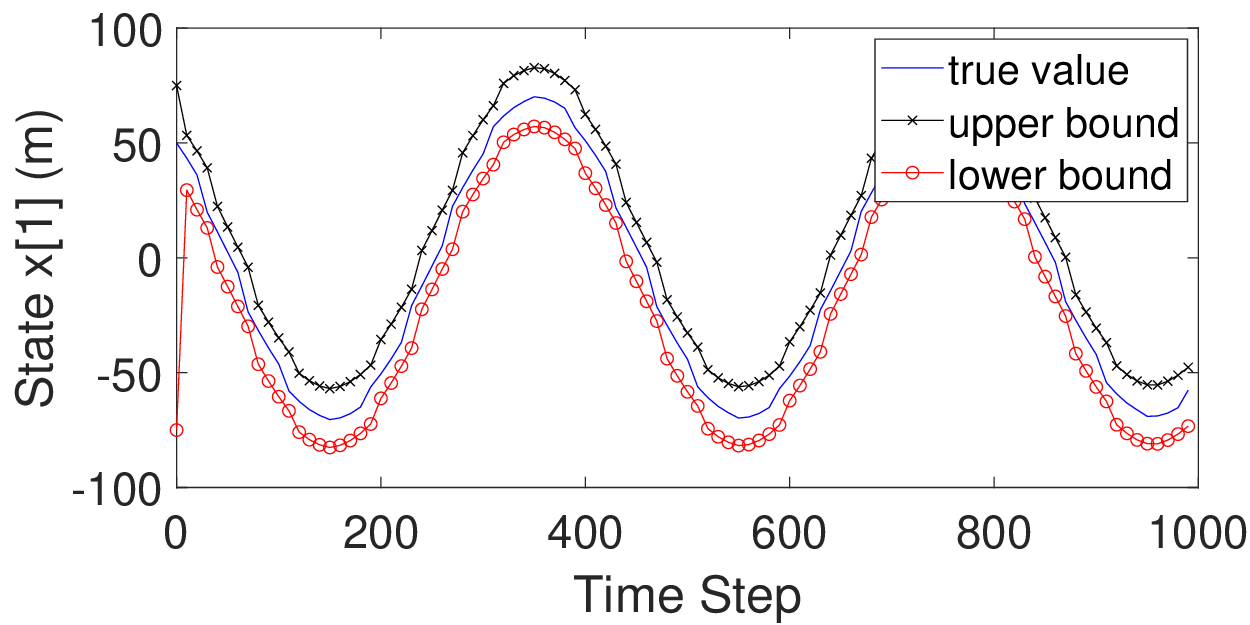}
        \caption{}
        \label{fig:set-x}
    \end{subfigure}
       } 
   &
   \resizebox{0.49\textwidth}{!}{
            \begin{subfigure}[h]{0.49\textwidth}
      \centering
        \includegraphics[scale=0.4]{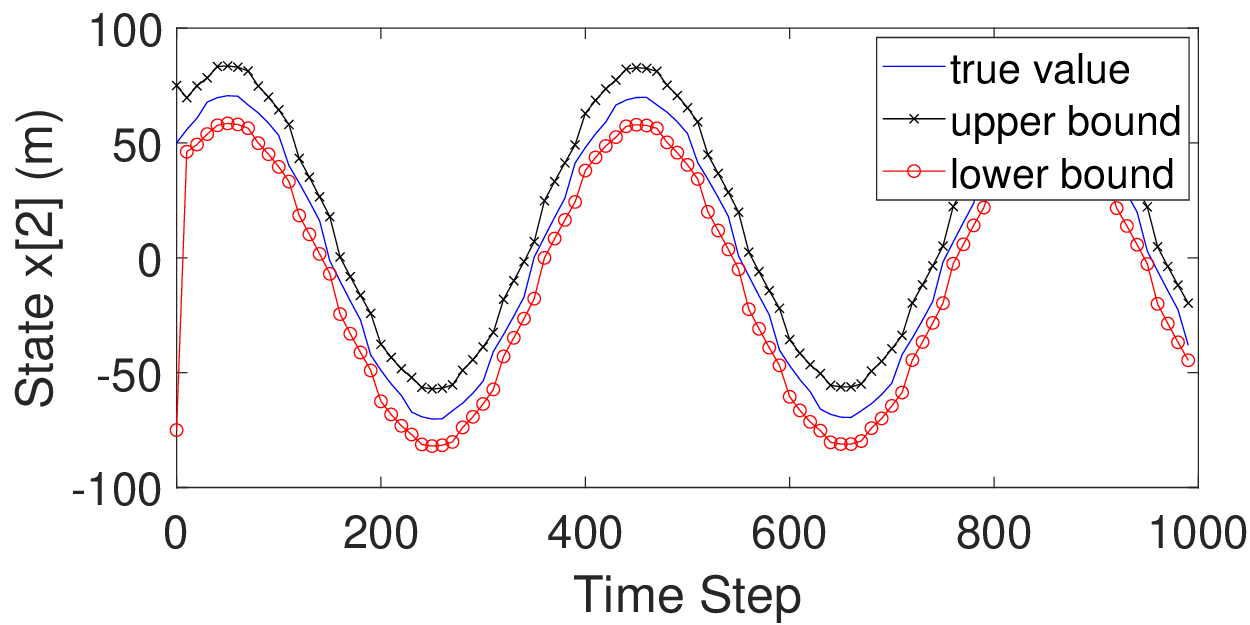}
        \caption{}
        \label{fig:set-y}
    \end{subfigure}
      }
  \end{tabular}
\caption{True values, upper bounds, and lower bounds of the two-dimensional estimated states using our strip-based approach in Algorithm~\ref{alg:1}. Every node has four neighbors.}
    \label{fig:set-uplowboundofx}
\end{figure*}
\section{Evaluation} \label{sec:eval}
Our proposed algorithms are implemented in Matlab 2019 on an example similar to the one presented in \citep{conf:cdcdsse,conf:disali2}, where a network of eight nodes attempts to track the position of a rotating object. 
We made use of CORA \citep{conf:cora1} for zonotope operations along with implementations from \citep{conf:tersekf}. 
The state of each node consists of the unknown two-dimensional position of the rotating object. The state matrix in \eqref{eq:sysmodel} is
\begin{align}
 F=\begin{bmatrix}
0.992 & -0.1247 \\
0.1247 & 0.992 
\end{bmatrix},
\end{align}
and the measurement matrix $H^i$ alternates between $[0 1]$ and $[1 0]$ in the sequence of the taken measurements. We run our proposed algorithms in comparison with the one proposed by Garcia et al.~\citep{conf:dis-kalmaninspired} and DKF~\citep{conf:diffusion} on the same generated data set. The related work in set-based methods does not consider sharing the measurements between the neighbors like our approach, and this affects the estimation results but comes at the cost of extra communication. Thus, we will analyze our algorithms with and without sharing the measurements for a fair comparison. Figure~\ref{fig:set-uplowboundofx} shows the true values, upper bound, and lower bound for each dimension of the estimated state using the strip-based approach in Algorithm~\ref{alg:1} while each node is connected to four neighbors. We start with a set ($160 \times 160 m^2$) covering the whole localization area at the initial point (time step $0$), then it becomes smaller due to receiving measurements and performing geometric intersection to correct the estimated state. In addition, we repeat the same experiments using Algorithm~\ref{alg:2} and present the results in Figure~\ref{fig:berger-uplowboundofx}. 





\begin{figure*}[t]
    \centering
    \begin{tabular}{ p{0.50\textwidth}  p{0.50\textwidth}}
    \resizebox{0.49\textwidth}{!}{
            \begin{subfigure}[h]{0.49\textwidth}
      \centering
        \includegraphics[scale=0.4]{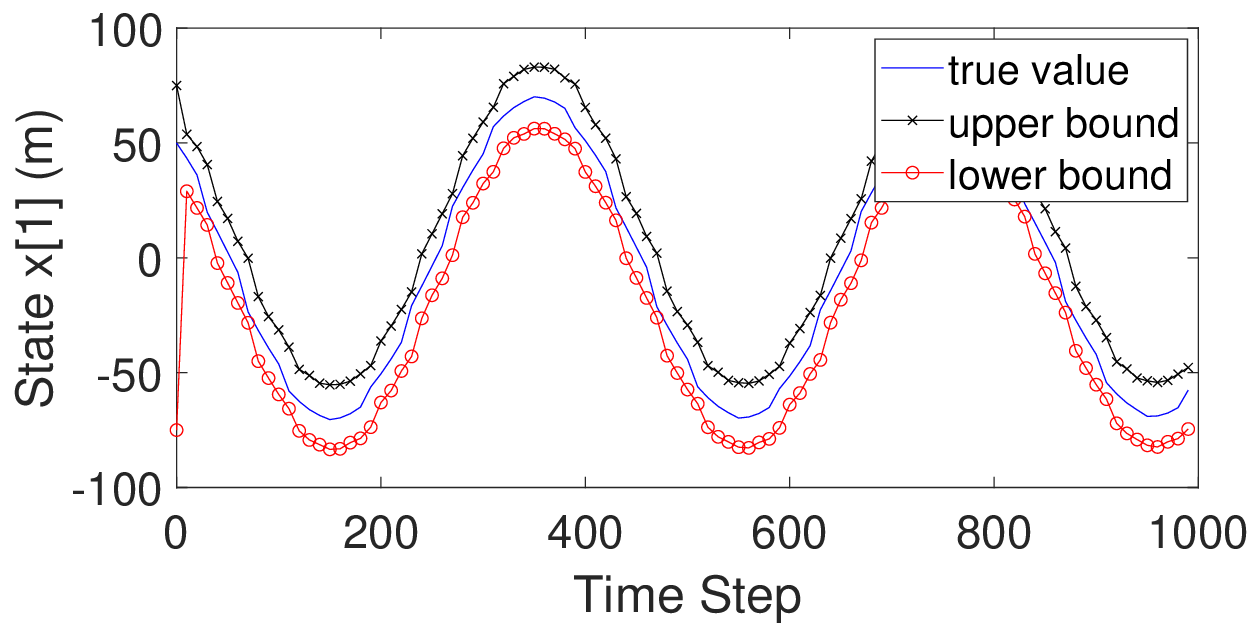}
        \caption{}
        \label{fig:berger-x}
    \end{subfigure}
       } 
   &
   \resizebox{0.49\textwidth}{!}{
            \begin{subfigure}[h]{0.49\textwidth}
      \centering
        \includegraphics[scale=0.4]{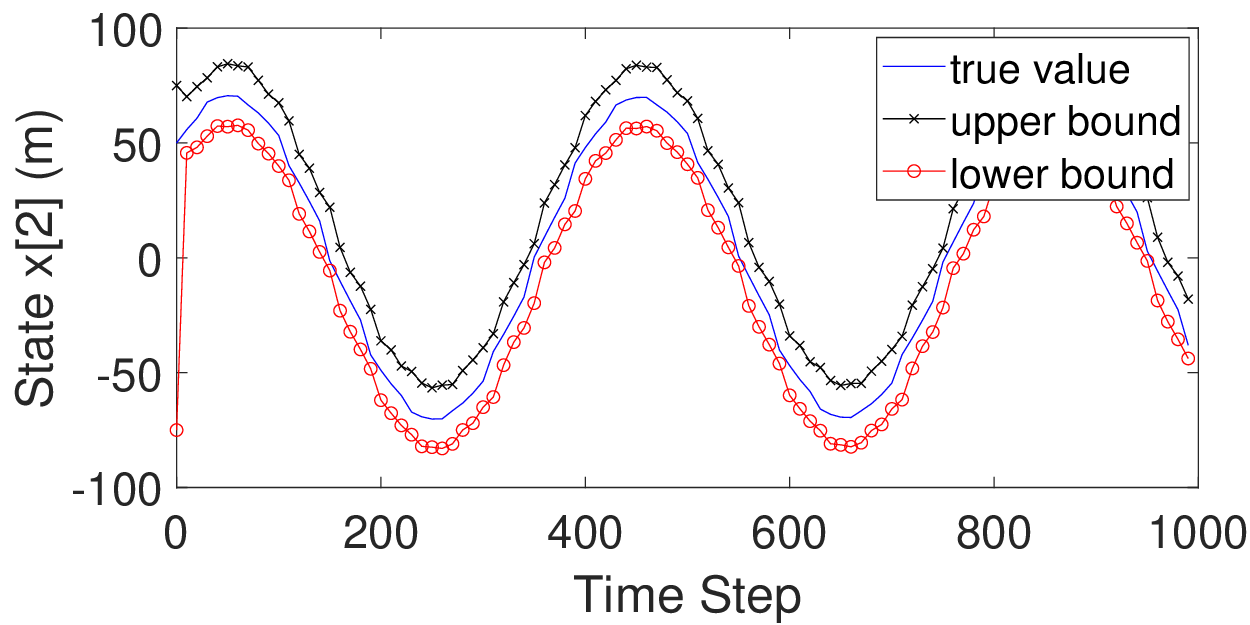}
        \caption{}
        \label{fig:berger-y}
    \end{subfigure}
      }
  \end{tabular}
\caption{True values, upper bounds, and lower bounds of the two-dimensional estimated states using our set-propagation approach in Algorithm~\ref{alg:2}. Every node has four neighbors.}
    \label{fig:berger-uplowboundofx}
\end{figure*}

\begin{figure*}[t]
    \centering
    \begin{tabular}{ p{0.49\textwidth}  p{0.49\textwidth}}
       \resizebox{0.49\textwidth}{!}{
            \begin{subfigure}[h]{0.49\textwidth}
      \centering
\includegraphics[scale=0.32]{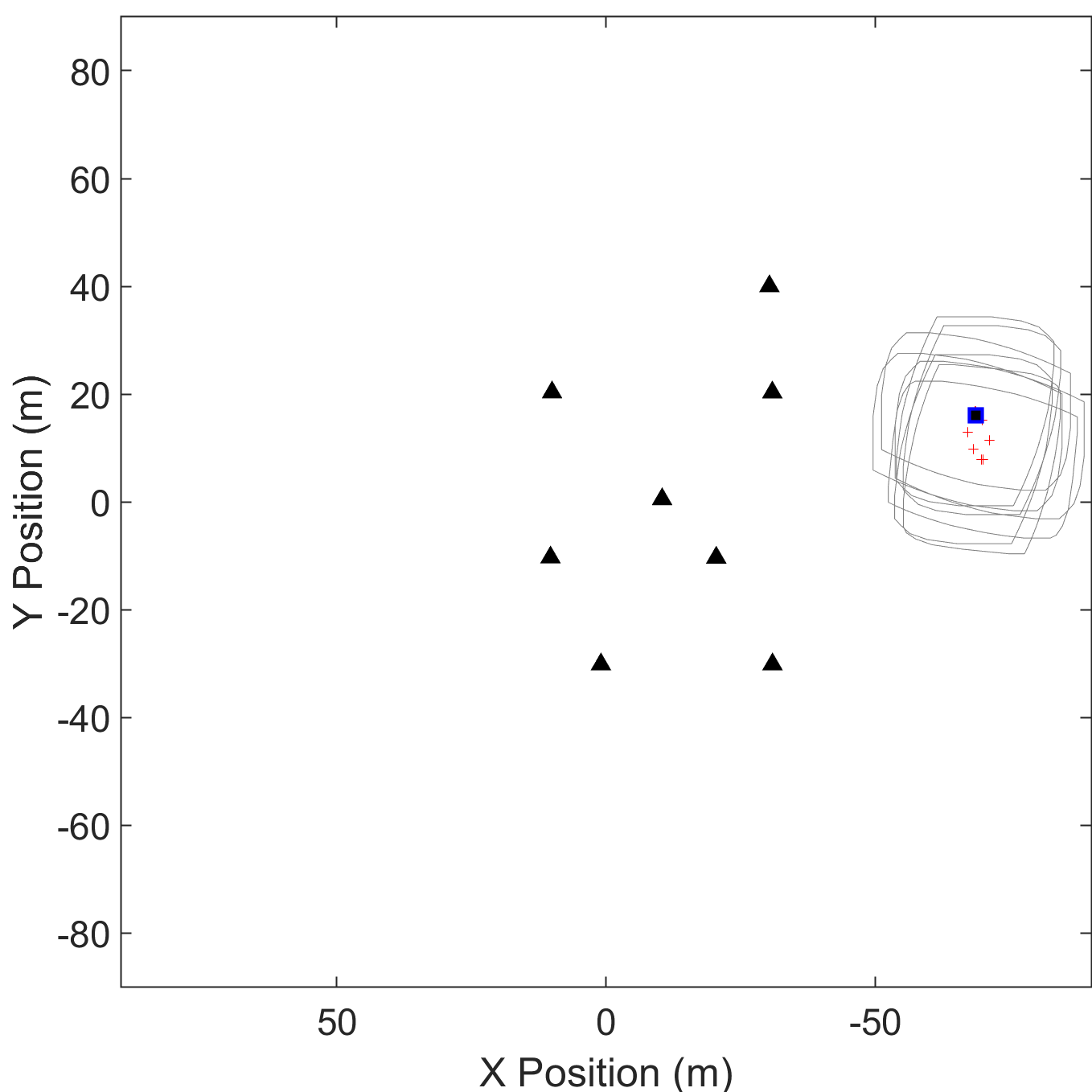}
        \caption{}
        \label{fig:snapwithout}
    \end{subfigure}
      }
      &
        \resizebox{0.49\textwidth}{!}{
            \begin{subfigure}[h]{0.49\textwidth}
      \centering
\includegraphics[scale=0.32]{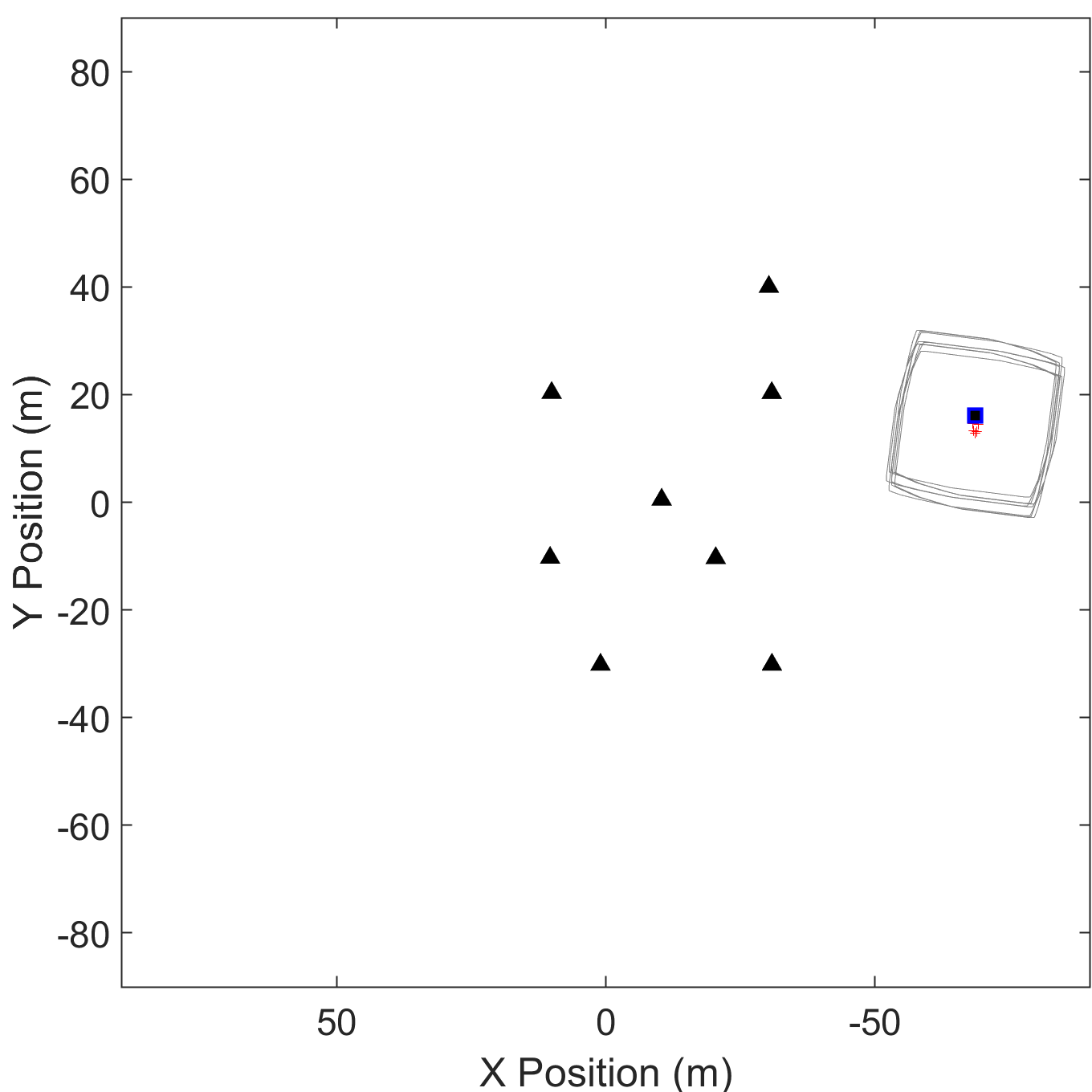}
        \caption{}
        \label{fig:snapwith}
    \end{subfigure}
       } 
  \end{tabular}
\caption{Snapshots of the estimated zonotopes with the red pluses as centers. The triangles are the true positions for the observing nodes. The blue rectangle is the true position of the rotating target. Figure~\ref{fig:snapwithout} shows the distributed estimated zonotopes using Algorithm~\ref{alg:2} without the diffusion step. Figure~\ref{fig:snapwith} shows Algorithm~\ref{alg:2} with the diffusion step.}
    \label{fig:snapstwo}
\end{figure*}

\setlength\tabcolsep{8pt}
\begin{table*}[tbp]
\caption{The mean and standard deviation of the Hausdorff distance (m) between the estimated zonotopes in comparison to \citep{conf:dis-kalmaninspired} and \citep{conf:diffusion}, where every node has two, four, or six neighbors. The results are presented with and without the diffusion step and sharing the measurements over different network connectivities.}
\label{tab:distance}
\centering
\normalsize
\begin{tabular}{l c c c c c c c c}
\toprule
  & && \multicolumn{2}{c}{Six neighbors}&\multicolumn{2}{c}{Four neighbors}&\multicolumn{2}{c}{Two neighbors}\\
  \cmidrule(lr){4-5} \cmidrule(lr){6-7}\cmidrule(lr){8-9}
Algorithm & Meas. sharing &Diffusion & mean & std &mean & std&mean & std\\
\midrule
Alg.~\ref{alg:1} &\cmark &\cmark & 0.242 & 0.160& 0.824 & 0.409 & 2.813 & 2.163\\
Alg.~\ref{alg:1} &\xmark &\cmark &3.019 & 2.244&3.899 & 4.062& 5.914 & 3.683\\
Alg.~\ref{alg:1} &\cmark &\xmark &1.517 & 1.393& 2.703 & 2.118& 3.829 & 2.360\\
Alg.~\ref{alg:2} &\cmark &\cmark &0.333 & 0.242 &1.897 & 1.332 & 3.871 & 2.362\\
Alg.~\ref{alg:2} &\xmark &\cmark &1.601 & 1.846& 3.202 & 2.889&5.760 & 3.607\\
Alg.~\ref{alg:2} &\cmark &\xmark  &1.813 & 1.553 &3.405 & 2.460 & 4.855 & 2.464\\
Garcia  \citep{conf:dis-kalmaninspired}& - &-& 32.482 & 19.578  & 29.523 & 17.387 & 25.443 & 14.517  \\
DKF  \citep{conf:diffusion}& - & - &0.195 & 0.129& 0.680 & 0.420 &2.811 & 2.099 \\
\bottomrule
\end{tabular}
\end{table*}

The effect of the diffusion step is analyzed graphically over a network with low connectivity, where every node is connected to two nodes only. Snapshots of the estimated zonotopes by the distributed nodes in Algorithm~\ref{alg:2} are shown in Figure~\ref{fig:snapstwo}. The triangles are the true positions of the monitoring nodes. The estimates are the centers of the zonotopes, which are represented by red pluses. Figure~\ref{fig:snapwithout} and~\ref{fig:snapwith} show the results without and with the diffusion step, respectively. As shown in the aforementioned figures, the diffusion step allows the estimated zonotopes by the distributed nodes to partially consense on a set, which is one of the advantages of adding the diffusion step. 

The Hausdorff distance measures how far two subsets of a metric space are from each other. Thus, as another measure of the estimated zonotope consistency for all the distributed nodes, we calculate the Hausdorff distance between each zonotope at different time steps. We analyze the Hausdorff distance over different network connectivities. The results are reported in Table~\ref{tab:distance} for DKF, Garcia et al. \citep{conf:dis-kalmaninspired}, Algorithms~\ref{alg:1} and~\ref{alg:2} with and without the diffusion step and measurements sharing between the neighbours. The diffusion step enhances the alignment between the estimated zonotopes, significantly affecting a network with low connectivity. For the aforementioned network, every node has access to a lower number of measurements, and thus the diffusion step provides more information to the distributed nodes and enhances the alignment of the estimated zonotopes, estimation results, and radii of the estimated zonotopes. The Hausdorff distance decreases with increasing the number of neighbors in Table~\ref{tab:distance} as the amount of shared information increases with increasing the connectivity. The set in the case of DKF is computed using the $3\sigma$ confidence interval of each node. The Hausdorff distance is smaller in the case of DKF; however, it comes without state containment guarantees. 


\setlength\tabcolsep{8pt}
\begin{table*}[tbp]
\caption{The mean and standard of the radius (m) of the estimated zonotopes by the proposed algorithms in comparison to \citep{conf:dis-kalmaninspired} and \citep{conf:diffusion}. The results are presented with and without the diffusion step and sharing the measurements over different network connectivities.}
\label{tab:radius}
\centering
\normalsize
\begin{tabular}{l c c c c c c c c}
\toprule
  & && \multicolumn{2}{c}{Six neighbors}&\multicolumn{2}{c}{Four neighbors}&\multicolumn{2}{c}{Two neighbors}\\
  \cmidrule(lr){4-5} \cmidrule(lr){6-7}\cmidrule(lr){8-9}
Algorithm & Meas. sharing &Diffusion & mean & std &mean & std&mean & std\\
\midrule
Alg.~\ref{alg:1} &\cmark&\cmark & 11.877 & 0.057  &12.626 & 0.221 &15.104 & 0.442\\
Alg.~\ref{alg:1} &\xmark &\cmark &22.267 & 3.507&24.405 & 2.969& 26.973 & 4.012\\
Alg.~\ref{alg:1} &\cmark&\xmark &13.312 & 0.419&13.084 & 0.272 & 12.920 & 0.210\\
Alg.~\ref{alg:2} &\cmark&\cmark &13.266 & 1.235&13.515 & 0.432& 15.257 & 0.393\\
Alg.~\ref{alg:2} &\xmark &\cmark &20.943 & 3.598&21.699 & 3.142& 21.241 & 3.808\\
Alg.~\ref{alg:2} &\cmark&\xmark  & 16.690 & 1.443  &17.174 & 0.918& 18.531 & 1.330 \\
Garcia  \citep{conf:dis-kalmaninspired}& - &-&53.681 & 21.992 & 49.932 & 19.373 & 44.671 & 16.040 \\
DKF  \citep{conf:diffusion}& - &-&3.671 & 0.201&3.949 & 0.283 & 4.463 & 0.482 \\
\bottomrule
\end{tabular}
\end{table*}

\begin{table*}[tbp]
\caption{The mean and standard deviation of the localization error (m) of the center of the estimated zonotopes using the proposed algorithms in comparison to \citep{conf:dis-kalmaninspired} and \citep{conf:diffusion} over different network connectivities, where every node has two, four, or six neighbors. The results are presented with and without the diffusion step and sharing the measurements over different network connectivities.}
\label{tab:loc_err}
\centering
\normalsize
\begin{tabular}{l c c c c c c c c}
\toprule
  & && \multicolumn{2}{c}{Six neighbors}&\multicolumn{2}{c}{Four neighbors}&\multicolumn{2}{c}{Two neighbors}\\
  \cmidrule(lr){4-5} \cmidrule(lr){6-7}\cmidrule(lr){8-9}
Algorithm & Meas. sharing &Diffusion & mean & std &mean & std&mean & std\\
\midrule
Alg.~\ref{alg:1} & \cmark&\cmark &2.227 & 0.490 & 2.419 & 0.485 &3.485 & 0.486\\
Alg.~\ref{alg:1} &\xmark &\cmark &3.145 & 4.226&3.020 & 4.095&3.980 & 4.471\\
Alg.~\ref{alg:1} & \cmark&\xmark &2.463 & 0.409&3.528 & 0.315 &5.605 & 0.477\\
Alg.~\ref{alg:2} & \cmark&\cmark &1.495 & 0.513& 0.892 & 0.415& 4.882 & 0.871\\
Alg.~\ref{alg:2} &\xmark &\cmark &3.630 & 4.191&6.475 & 3.512&13.311 & 4.060\\
Alg.~\ref{alg:2} & \cmark&\xmark  &2.489 & 0.489 &3.550 & 0.459& 5.853 & 0.761 \\
Garcia  \citep{conf:dis-kalmaninspired}& -&- &11.729 & 1.573 & 11.877 & 1.571& 12.059 & 1.717 \\
DKF  \citep{conf:diffusion}& -&- &2.207 & 0.468&2.403 & 0.464 &3.773 & 0.528 \\
\bottomrule
\end{tabular}
\end{table*}



One important aspect of the performance of the set-based estimation algorithm is reducing the resulting radius of the over-approximating estimated set. Therefore, we analyze the radii of the estimated zonotopes of the proposed algorithms in comparison to the previous work in \citep{conf:dis-kalmaninspired} and \citep{conf:diffusion}. Table~\ref{tab:radius} shows the mean and standard deviation of the radii with and without the diffusion step and measurement sharing. The diffusion step and the measurement sharing decrease the radii due to the proposed intersection criteria. Moreover, our proposed algorithms with and without the diffusion step are much better than the previous work in \citep{conf:dis-kalmaninspired}. We note that the network with higher connectivity has a smaller radius as the intersection with more strips decreases the estimated set. The center of the estimated zonotope is considered a single-point estimate of the proposed algorithms. Therefore, we report the localization error of the estimated centers by the proposed algorithms in Table~\ref{tab:loc_err}. The diffusion significantly enhances the center estimate of the proposed algorithms. Again, the diffusion step is more effective in a network with a low connectivity.

Table~\ref{tab:exectime} shows the execution time of each step in the proposed algorithms while again changing the number of neighbors. To measure the execution time, we run each step $500$ times with randomly generated zonotopes with $20$ generators and take the average execution time. The measurements were taken on an 11$^{th}$ Generation Intel(R) Core(TM) i7-1185G7 processor with 16.0 GB RAM. The time update step does not depend on the number of neighbors. 

\begin{table}[tbp]
\caption{Execution time in $\mu$ seconds of the proposed measurement, diffusion, Luenberger, and time updates in Algorithms~\ref{alg:1} and~\ref{alg:2} with a different number of neighbors.}
\label{tab:exectime}
\centering
\normalsize
\begin{tabular}{l  c c c}
\toprule
   & \multicolumn{3}{c}{Number of neighbors}\\
  \cmidrule(lr){2-4} 
Step  & Six & Four & Two\\
\midrule
Measurement update & $76$   &  $70$ & $63$ \\
Diffusion update & $64$  & $48$ & $31$\\
Time update &  $24$ & $24$ & $24$ \\
Luenberger update & $77$ & $73$ & $64$ \\
\bottomrule
\end{tabular}
\end{table}

The main challenges in proposing new set-based observers are mainly choosing the set representation. We chose zonotopes as one can efficiently compute linear maps and Minkowski sums – both are essential operations for set-based observers. In addition, selecting the appropriate optimization function for computing the observer gain, which can maintain low computation costs and high accuracy in comparison to the standard volume minimization technique, was a challenge. We ended up choosing the Frobenius norm as a lightweight indication of the volume of the zonotope.

\section{Conclusion} \label{sec:conc}
We propose two distributed strip-based and set-propagation observers using a diffusion strategy. Our algorithms remove the need for a fusion center. They only require every node to communicate with its neighbors: first, to share the measurements and second, to share the estimates. The diffusion step ensures that information is propagated throughout the network in order to converge to the best estimate and provide consistency between the estimated sets. We propose a new over-approximation for zonotopes intersection to compose the diffusion step and evaluate our algorithms in a localization example of a rotating object.

\section*{Acknowledgements}
We gratefully acknowledge partial financial support by the project justITSELF, funded by the European Research Council (ERC) under grant agreement No 817629, the project interACT under grant agreement No 723395, and the CONCORDIA cyber security project No. 830927; these projects are funded within the EU Horizon 2020 program. The Swedish Research Council and Knut and Alice Wallenberg Foundation also supported this work.




\bibliographystyle{model1-num-names}
\bibliography{references}

\end{document}